%%%%%%%%%%%%%%%%%%%%%%%%%%%%%%%%%%%%%%%%%%%%%%%%%%%%%%%%%%%%%%%%%%%%%%%%%%%%%%%%
%2345678901234567890123456789012345678901234567890123456789012345678901234567890
%        1         2         3         4         5         6         7         8

\documentclass[letterpaper, 10 pt, conference]{ieeeconf}  % Comment this line out if you need a4paper

\IEEEoverridecommandlockouts                              % This command is only needed if 
                                                          % you want to use the \thanks command

\overrideIEEEmargins                                      % Needed to meet printer requirements.
\usepackage{color,graphicx,amssymb,amsmath,verbatim,epsfig,epstopdf,multirow,mathrsfs}
\usepackage{setspace,amstext,array,tabu,bm}   
\usepackage{adjustbox}
\usepackage{url}
\usepackage{slashbox}
\newcolumntype{L}{>{$}l<{$}}  % math-mode version of "l" column type
\usepackage{hyperref}
\usepackage{hypcap,fancyhdr}
 
\setcounter{MaxMatrixCols}{10}
\newtheorem{theorem}{Theorem}

\newtheorem{example}[theorem]{Example}

\newtheorem{proposition}[theorem]{Proposition}
\newtheorem{remark}[theorem]{Remark}

%\newenvironment{proof}[1][Proof]{\textbf{#1.} }{\ \rule{0.5em}{0.5em}}

% matrix environment for vectors or matrices where elements are centered
\newenvironment{mat}{\left(\begin{array}{ccccccccccccccc}}{\end{array}\right)}
\newcommand\bcm{\begin{mat}}
\newcommand\ecm{\end{mat}}

% matrix environment for vectors or matrices where elements are right justifvied
\newenvironment{rmat}{\left[\begin{array}{rrrrrrrrrrrrr}}{\end{array}\right]}
\newcommand\brm{\begin{rmat}}
\newcommand\erm{\end{rmat}}

\newcommand{\EE}{{\mathord{I\kern -.33em E}}}
\def\E{{\EE}} % expectation
 % Filtration
 % positive integers
\def\Q{{\mathbb Q}} % rationals
\def\P{{\mathbb P}} 
\def\R{{\mathbb R}} % reals
\def\1{{\bm 1}} % indicator
 % potential measure
 % potential measure

%\addtolength{\hoffset}{-1.5cm}
%\addtolength{\voffset}{-2cm}
%\addtolength{\textheight}{4cm}
%\addtolength{\textwidth}{3cm}
\DeclareMathOperator{\Tr}{Tr}
\DeclareMathOperator{\rank}{rank}
%In case you encounter the following error:
%Error 1010 The PDF file may be corrupt (unable to open PDF file) OR
%Error 1000 An error occurred while parsing a contents stream. Unable to analyze the PDF file.
%This is a known problem with pdfLaTeX conversion filter. The file cannot be opened with acrobat reader
%Please use one of the alternatives below to circumvent this error by uncommenting one or the other
%\pdfobjcompresslevel=0
%\pdfminorversion=4

% See the \addtolength command later in the file to balance the column lengths
% on the last page of the document

% The following packages can be found on http:\\www.ctan.org
%\usepackage{graphics} % for pdf, bitmapped graphics files
%\usepackage{epsfig} % for postscript graphics files
%\usepackage{mathptmx} % assumes new font selection scheme installed
%\usepackage{times} % assumes new font selection scheme installed
%\usepackage{amsmath} % assumes amsmath package installed
%\usepackage{amssymb}  % assumes amsmath package installed
\pdfminorversion=4

\title{\LARGE \bf  Optimal Dynamic Futures Portfolios Under \\ a Multiscale Central Tendency Ornstein-Uhlenbeck Model}

\author{Tim Leung$^{1}$ and Yang Zhou$^{2}$
\thanks{$^{1}$Department of Applied Mathematics, University of Washington, Seattle WA 98195. E-mail:
\mbox{timleung@uw.edu}. Corresponding author. } \thanks{$^{2}$Department of Applied Mathematics, University of Washington, Seattle WA 98195. E-mail:
\mbox{yzhou7@uw.edu}.} }
\begin{document}

\maketitle
\thispagestyle{empty}
\pagestyle{empty}

%%%%%%%%%%%%%%%%%%%%%%%%%%%%%%%%%%%%%%%%%%%%%%%%%%%%%%%%%%%%%%%%%%%%%%%%%%%%%%%%
\begin{abstract} We study the problem of dynamically trading multiple futures whose   underlying asset price  follows a multiscale central tendency Ornstein-Uhlenbeck (MCTOU) model. Under this model, we derive the closed-form no-arbitrage prices for the futures contracts. Applying a utility maximization approach, we solve for the optimal trading strategies under different portfolio configurations by examining the associated system of Hamilton-Jacobi-Bellman (HJB) equations. The optimal strategies depend on not only the parameters of the underlying asset price process, but also the risk premia embedded in the futures prices.  Numerical examples are provided to illustrate the investor's optimal positions and optimal wealth over time. \end{abstract}

%%%%%%%%%%%%%%%%%%%%%%%%%%%%%%%%%%%%%%%%%%%%%%%%%%%%%%%%%%%%%%%%%%%%%%%%%%%%%%%%
\section{Introduction}

Futures are standardized exchange-traded bilateral \mbox{contracts} of agreement to buy or sell an asset at a pre-determined price at a pre-specified time in the future.  The underlying asset can be a physical commodity, like gold and silver, or oil and gas, but it can also be a  market index like the S\&P 500 index or the CBOE volatility index. 

Futures are an integral part of the derivatives market. The Chicago Mercantile Exchange (CME), which is the world's largest futures exchange, averages over 15 million futures contracts traded per day.\footnote[3]{Source: CME Group daily exchange volume and open interest report, available at \url{https://www.cmegroup.com/market-data/volume-open-interest/exchange-volume.html}} Within the universe of hedge funds and alternative investments, futures funds constitute a major part with hundreds of billions under management. This motivates us to investigate the problem of trading  futures portfolio dynamically over time.
 
In this paper, we introduce a   multiscale central tendency Ornstein-Uhlenbeck (MCTOU) model to describe the  price dynamics of the underlying asset.  This is a  three-factor model  that is driven by  a fast  mean-reverting OU process and a slow  mean-reverting OU process.  Similar multiscale framework has been  widely used for modeling stochastic volatility of stock prices  \cite{Fouque2000}. The flexibility of   multifactor models permits good fit to empirical term structure as displayed in the market. Especially in deep and liquid futures markets, such as crude oil or gold, with over ten contracts of various maturities actively traded at any given time, multifactor models are particular useful.  In the literature,  we refer to \cite{nfactor} for  a multifactor Gaussian model for pricing oil futures, and \cite{CORTAZAR2017182} for a multifactor stochastic volatility model for commodity prices to enhance calibration against observed option prices.

Under the MCTOU model, we first derive the no-arbitrage price formulae for the futures contracts.  In turn, we solve a utility maximization problem to derive the optimal trading strategies over a finite trading horizon.  This stochastic control approach  leads to the analysis of the associated    Hamilton-Jacobi-Bellman (HJB) partial differential equation satisfied by the investor's value function. We derive  both the  investor's value function and   optimal strategy explicitly. 

Our solution also yields the formula for the investor's certainty equivalent, which   quantifies the value of the futures trading opportunity to the investor. Surprisingly the value function, optimal strategy and certainty equivalent   depend not on the current spot and futures prices, but on the associated risk premia. In addition, we provide the numerical examples to illustrate the investor's optimal futures positions and optimal wealth over time.
 
In the literature, stochastic control approach has been widely applied to continuous-time  dynamic optimization of  stock portfolios  dating  back to \cite{Merton}, but much less has been done for portfolios of futures and other derivatives.  For futures portfolios, one must account for the risk-neutral pricing before solving for the optimal trading strategies. To that end, our model falls within the multi-factor Gassian model for futures pricing, as used for oil futures in \cite{nfactor}.  The utility maximization approach is used to derive dynamic futures trading strategies under two-factor models in \cite{firstpaper} and \cite{secondpaper}. A general regime-switching framework for dynamic futures trading can be found in \cite{Leung2019}. As an alternative approach for capturing  futures and spot price dynamics,  the stochastic basis model \cite{BahmanLeung, Angoshtari2019b} directly models the difference between the futures and underlying asset prices, and solve for the optimal trading strategies through utility maximization.

%The stochastic approach to trading cointegrated securities in a general multi-factor setting have been presented in \cite{CarteaJaimungal,Jorge}. 

In comparison to these studies, we have extended the investigation of optimal trading in commodity futures market under two-factor models to a three-factor model. Closed-form expressions for the optimal controls and for the value function are obtained. Using these formulae, we illustrate  the optimal strategies. Intuitively, it should be more beneficial to be able to access a larger set of securities, and this intuition is confirmed quantitatively. Therefore, we consider all available contracts of different maturities in that market. From our numerical example, the highest certainty equivalent is achieved from trading every contract that is available.

There are a few alternative approaches and applications of dynamic futures portfolios. In \cite{LeungLiLiZheng2015} and   \cite{meanreversionbook2016}, an optimal stopping approach for futures trading is studied. In practice, dynamic futures portfolios are also commonly used to track a commodity index \cite{LeungWard}.

%As used in \cite{Schwartz97}, Kalman filtering methodology can handle multi-factor models with hidden state variables and measurement errors. 

%In reality, the critical concern is transaction costs -- trading more contracts might not necessarily be more profitable in the presence of market frictions.
%It is equally important to identify the break-even point in the number of contracts beyond which profitability might be eroded by transaction costs,
%and the determination of which requires incorporation of transaction costs into the model.

%Inclusion of measurement errors is necessary since the number of available market prices is generally higher than the number of state variables that need to be estimated.
%In addition, Kalman filtering is capable of using a large price panel in the estimation process, 
%avoiding the necessity of making an arbitrary selection of contracts to include in the estimation.

%The rest of this paper is structured as follows. We introduce the MCTOU process in Section \ref{sect-Multiscale-CTOU}. The futures pricing and portfolio optimization problem are discussed in Section \ref{sect-formula-multiscale}.   Numerical examples are provided  in Section \ref{sect-numeric}. 

 \section{The Multiscale Central Tendency Ornstein-Uhlenbeck Model}\label{sect-Multiscale-CTOU}
We now present the multiscale central tendency Ornstein-Uhlenbeck (MCTOU) model that describes the price dynamics of the underlying asset. This leads to the no-arbitrage pricing of the associated futures contracts. Hence, the dynamcis under both the physical measure $\mathbb{P}$ and risk-neutral pricing measure $\mathbb{Q}$ are discussed. 

\subsection{Model Formulation}
We fix a probability space $(\Omega, \mathcal{F},\mathbb{P})$. The log-price of the underlying asset $S_t$ is denoted by $X_t^{(1)}$. Its evolution under the physical measure $\P$ is given by the system of stochastic differential equations:
\begin{align}
dX^{(1)}_t &= \kappa\left(X^{(2)}_t+X^{(3)}_t-X^{(1)}_t\right)dt+\sigma_1 dZ^{\P,1}_t,\label{model-multiscale-begin}\\
dX^{(2)}_t &= \frac{1}{\epsilon}\left(\alpha_2-X^{(2)}_t\right)dt\notag\\
&+\frac{1}{\sqrt \epsilon}\sigma_2 \left(\rho_{12} dZ^{\P,1}_t+\sqrt{1-\rho_{12}^2} dZ^{\P,2}_t\right),\label{model-multiscale-fast}\\
dX^{(3)}_t &= \delta\left(\alpha_3-X^{(3)}_t\right)dt+\sqrt{\delta}\sigma_3\bigg(\rho_{13} dZ^{\P,1}_t\notag\\
&+\rho_{23}dZ^{\P,2}_t+\sqrt{1-\rho_{13}^2-\rho_{23}^2}dZ^{\P,3}_t\bigg),\label{model-multiscale-end}
\end{align}
where $Z^{\P,1}$ , $Z^{\P,2}$ and $Z^{\P,3}$ are independent Brownian motions under the physical measure $\P$. 

Under this model, the   mean process of log-price $X^{(1)}$ is the sum of two stochastic factors, $X_t^{(2)}$ and $X_t^{(3)}$, modeled by two different OU processes. The first factor $X_t^{(2)}$ is
 fast mean-reverting. The rate of mean reversion is  represented by  $1/\epsilon$, with $\epsilon > 0$ being a small parameter corresponding to the time scale of this process. $X_t^{(2)}$ is an ergodic process
and its invariant distribution is independent of $\epsilon$. This distribution is Gaussian with mean $\alpha_2$ and variance $\sigma_2^2/2$.   In contrast, the second factor $X_t^{(3)}$ is a slowly mean-reverting OU process. The rate of mean reversion is represented   by a small parameter $\delta>0$. 

The three processes $X^{(1)}_t$, $X_t^{(2)}$, and $X_t^{(3)}$ can be correlated.  The correlation coefficients $\rho_{12}$, $\rho_{13}$, and $\rho_{23}$ are constants, which satisfy $|\rho_{12}|<1$ and $\rho_{13}^2+\rho_{23}^2<1$.

We specify the market prices of risk as $\zeta_i$, for $i=1,2,3$, which satisfy
\begin{equation}
dZ^{\Q,i}_t=dZ_t^{\P,i}+\zeta_i\, dt,
\end{equation}
where $Z^{\Q,1}$, $Z^{\Q,2}$ and $Z^{\Q,3}$ are independent Brownian motions under risk-neutral pricing measure $\Q$. We introduce the combined market prices of   risk $\lambda_i$, for $i=1,2,3$, defined by
\begin{align}
\lambda_1 &=\zeta_1 \sigma_1 ,\\
\lambda_2 &= \frac{1}{\sqrt \epsilon}\sigma_2 \left(\zeta_1\rho_{12} +\zeta_2\sqrt{1-\rho_{12}^2} \right),\\
\lambda_3 &= \sqrt{\delta}\sigma_3\left(\zeta_1\rho_{13} +\zeta_2\rho_{23}+\zeta_3\sqrt{1-\rho_{13}^2-\rho_{23}^2}\right).
\end{align}
Then, we write the evolution under the risk-neutral measure $\Q$ as:
\begin{align}
dX^{(1)}_t &= \kappa\left(X^{(2)}_t+X^{(3)}_t-X^{(1)}_t-\lambda_1/\kappa\right)dt\notag\\
&+\sigma_1 dZ^{\Q,1}_t,
\end{align}
\begin{align}
%dX^{(1)}_t &= \kappa\left(X^{(2)}_t+X^{(3)}_t-X^{(1)}_t-\lambda_1/\kappa\right)dt\notag\\
%&+\sigma_1 dZ^{\Q,1}_t,\\
dX^{(2)}_t &= \frac{1}{\epsilon}\left(\alpha_2-X^{(2)}_t-\epsilon\lambda_2\right)dt\notag\\
&+\frac{1}{\sqrt \epsilon}\sigma_2 \left(\rho_{12} dZ^{\Q,1}_t+\sqrt{1-\rho_{12}^2} dZ^{\Q,2}_t\right),\\
dX^{(3)}_t &= \delta\left(\alpha_3-X^{(3)}_t-\lambda_3/\delta\right)dt+\sqrt{\delta}\sigma_3\bigg(\rho_{13} dZ^{\Q,1}_t\notag\\
&+\rho_{23}dZ^{\Q,2}_t+\sqrt{1-\rho_{13}^2-\rho_{23}^2}dZ^{\Q,3}_t\bigg).
\end{align}

For convenience, we define
\begin{align}
\bm{X}_t &= (X^{(1)}_t,X^{(2)}_t,X^{(3)}_t)',\\
\bm{Z}_t^\P &= (Z^{(\P,1)}_t,Z^{(\P,2)}_t,Z^{(\P,3)}_t)',\\
\bm{Z}_t^\Q &= (Z^{(\Q,1)}_t,Z^{(\Q,2)}_t,Z^{(\Q,3)}_t)',\\
\bm{\mu} &= (0,\alpha_2/\epsilon,\delta\alpha_3)',\\
\bm{\lambda} &= (\lambda_1,\lambda_2,\lambda_3)',\\
\bm{K} &= \bcm \kappa & -\kappa &-\kappa\\0 &1/\epsilon &0\\0&0&\delta
\ecm,\\
\bm{\Sigma}&=\bcm \sigma_1&0&0\\0&\sigma_2/\sqrt{\epsilon}&0\\0&0&\sqrt{\delta}\sigma_3\ecm, 
\end{align}
and
\begin{equation}
\bm{C} = \bcm 1 &0&0\\\rho_{12}&\sqrt{1-\rho_{12}^2}&0\\\rho_{13}&\rho_{23}&\sqrt{1-\rho_{13}^2-\rho_{23}^2}\ecm.
\end{equation}
Then, the evolution for $\bm{X}_t$ under   measures $\mathbb{P}$ and $\mathbb{Q}$ can be written concisely   as
\begin{equation}
d\bm{X}_t = (\bm{\mu}- \bm{K}\bm{X}_t) dt + \bm{\Sigma C}d\bm{Z}_t^{\P},\label{N_Factor_SDE_P}
\end{equation}
and
\begin{equation}
d\bm{X}_t=\bigg(\bm{\mu}-\bm{\lambda}-\bm{K} \bm{X}_t\bigg) dt + \bm{\Sigma C}d\bm{Z}_t^{\Q}.\label{N_Factor_SDE_Q}
\end{equation}

\begin{remark}
If the stochastic mean of log price $X^{(1)}$ is only modeled by $X^{(2)}$ or $X^{(3)}$, instead of their sum, it will reduce to the CTOU model, which is used  in \cite{MenciaSentana} for pricing VIX
futures. Under this model,    the futures portfolio optimization problem has been studied in \cite{firstpaper}. 
\end{remark}

\section{Futures Pricing and Futures Trading}\label{sect-formula-multiscale}

\subsection{Futures Pricing}
Let us consider three futures contracts $F^{(1)}$, $F^{(2)}$ and $F^{(3)}$, written on the same underlying asset $S$ with three arbitrarily chosen maturities $T_1$, $T_2$ and $T_3$ respectively. Recall that the asset price is given by \[S_t = \exp(X_t^{(1)}), \quad t\ge 0.\] Then, the  futures price at time $t\in[0,T]$ is given by
\begin{equation}
F^{(k)}(t,\bm{x}):= \E^\Q\big[\exp(X^{(1)}_{T_k})\,|\,\bm{X}_t=\bm{x}\,\big],\label{Futures_Definition}
\end{equation}
for $k=1,2,3$.
Define the linear differential  operator
\begin{equation}
\mathcal{L}^\Q\cdot = \bigg(\bm{\mu}-\bm{\lambda}-\bm{K} \bm{x}\bigg)'\nabla_{\bm{x}}\cdot+\frac{1}{2}\Tr \bigg(\bm{\Sigma\Omega  \Sigma} \nabla_{\bm{xx}}\cdot \bigg), \label{L_Q}
\end{equation}
where $\nabla_{\bm{x}}\cdot = ( \partial_{x_1}\cdot,\cdots ,
\partial_{x_N}\cdot)'$ is the nabla operator and Hessian operator $\nabla_{\bm{xx}}\cdot$ satisfies
\begin{equation}
\nabla_{\bm{xx}} \cdot=
\begin{bmatrix}
    \partial_{x_1}^2\cdot &  \partial_{x_1x_2}\cdot  & \dots  & \partial_{x_1x_N}\cdot  \\
    \partial_{x_1x_2}\cdot  & \partial_{x_2}^2\cdot   & \dots  & \partial_{x_2x_N}\cdot  \\
    \vdots & \vdots  & \ddots & \vdots \\
    \partial_{x_1x_N}\cdot  & \partial_{x_2x_N}\cdot   & \dots  & \partial_{x_N}^2\cdot 
\end{bmatrix}.
\end{equation}
Then, for $k=1,2,3$, the   futures price function $F^{(k)}(t,\bm{x})$ solves the following PDE
\begin{equation}
(\partial_t  + \mathcal{L}^\Q) F^{(k)}(t,\bm{x}) = 0,\label{Futures_PDE}
\end{equation}
for $(t,\bm{x})\in[0,T)\times \R^N$, with the terminal condition $F^{(k)}(T,\bm{x})=\exp(\bm{e_1}'\bm{x})$ for $\bm{x}\in \R^N$, where $\bm{e_1}=(1, 0 ,  0)'$. 

\begin{proposition}\label{prop-futures-price}
The futures price $F^{(k)}(t,\bm{x})$ is given by
\begin{equation}\label{Futures_Price_MultiScale}
F^{(k)}(t,\bm{x}) = \exp{\bigg(\bm{a}^{(k)}(t)'\bm{x}+\beta^{(k)}(t)\bigg)},
\end{equation}
where $\bm{a}^{(k)}(t)$ and $\beta^{(k)}(t)$ satisfy
\begin{align}\label{a(t)}
\bm{a}^{(k)}(t)&=\exp{\bigg(-(T-t)\bm{K}'\bigg)}\bm{e_1}\\
\beta^{(k)}(t) &= \int_t^T (\bm{\mu}-\bm{\lambda})'\bm{a}^{(k)}(s)\notag\\
&+\frac{1}{2}\Tr\bigg(\bm{\Sigma\Omega\Sigma}\bm{a}^{(k)}(s)\bm{a}^{(k)}(s)'\bigg)ds.\label{beta(t)}
\end{align}
\end{proposition}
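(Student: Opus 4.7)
The plan is to exploit the affine structure of the $\mathbb{Q}$-dynamics in \eqref{N_Factor_SDE_Q} (linear drift, state-independent diffusion) together with the exponential-affine terminal condition $\exp(\bm{e_1}'\bm{x})$, which guarantees that the Feynman--Kac solution of the pricing PDE \eqref{Futures_PDE} admits an exponential-affine ansatz. Substituting $F^{(k)}(t,\bm{x}) = \exp\bigl(\bm{a}^{(k)}(t)'\bm{x} + \beta^{(k)}(t)\bigr)$ into \eqref{Futures_PDE} should reduce that PDE to a pair of ODEs in $t$, which I can then solve in closed form to recover \eqref{a(t)} and \eqref{beta(t)}.

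Concretely, I would first compute the derivatives of the ansatz: $\partial_t F^{(k)} = F^{(k)}\bigl(\dot{\bm{a}}^{(k)}(t)'\bm{x} + \dot{\beta}^{(k)}(t)\bigr)$, $\nabla_{\bm{x}} F^{(k)} = F^{(k)}\bm{a}^{(k)}(t)$, and $\nabla_{\bm{xx}} F^{(k)} = F^{(k)}\bm{a}^{(k)}(t)\bm{a}^{(k)}(t)'$. Substituting into \eqref{Futures_PDE}, dividing by the strictly positive $F^{(k)}$, and rewriting $(\bm{\mu}-\bm{\lambda}-\bm{K}\bm{x})'\bm{a}^{(k)} = (\bm{\mu}-\bm{\lambda})'\bm{a}^{(k)} - \bm{x}'\bm{K}'\bm{a}^{(k)}$ yields an identity of the form (linear in $\bm{x}$) $+$ (constant in $\bm{x}$) $= 0$ that must hold for every $\bm{x}\in\R^3$. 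Matching the coefficients of $\bm{x}$ and the constants separately gives the vector ODE $\dot{\bm{a}}^{(k)}(t) = \bm{K}'\bm{a}^{(k)}(t)$ together with the scalar ODE $\dot{\beta}^{(k)}(t) + (\bm{\mu}-\bm{\lambda})'\bm{a}^{(k)}(t) + \tfrac{1}{2}\Tr\bigl(\bm{\Sigma\Omega\Sigma}\,\bm{a}^{(k)}(t)\bm{a}^{(k)}(t)'\bigr) = 0$. The terminal condition $F^{(k)}(T,\bm{x}) = \exp(\bm{e_1}'\bm{x})$ forces $\bm{a}^{(k)}(T)=\bm{e_1}$ and $\beta^{(k)}(T)=0$.

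Solving these ODEs is then routine. The linear vector ODE with terminal value $\bm{e_1}$ integrates backwards via a standard change of variable to $\bm{a}^{(k)}(t) = \exp\bigl(-(T-t)\bm{K}'\bigr)\bm{e_1}$, which is exactly \eqref{a(t)}. Integrating the scalar ODE for $\beta^{(k)}$ from $t$ to $T$ and using $\beta^{(k)}(T)=0$ gives \eqref{beta(t)} immediately. A brief uniqueness remark, justified by the Feynman--Kac representation together with the Gaussian tails of $\bm{X}$ under $\mathbb{Q}$ (which give the required integrability of $\exp(X^{(1)}_{T_k})$), certifies that the constructed candidate coincides with the conditional expectation in \eqref{Futures_Definition}. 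I do not anticipate a substantive analytical obstacle; the only care needed is bookkeeping the transpose $\bm{K}'$ and respecting the backward-in-time direction of the terminal-value problem. As a sanity check one could alternatively bypass the PDE and compute \eqref{Futures_Definition} directly, using that $\bm{X}_{T_k}\mid\bm{X}_t$ is Gaussian under $\mathbb{Q}$ (since \eqref{N_Factor_SDE_Q} is a linear SDE with additive noise), whence $\E^\Q[\exp(X^{(1)}_{T_k})\mid\bm{X}_t=\bm{x}] = \exp\bigl(\E^\Q[X^{(1)}_{T_k}\mid\bm{x}] + \tfrac{1}{2}\mathrm{Var}^\Q[X^{(1)}_{T_k}\mid\bm{x}]\bigr)$, and the explicit conditional mean and variance of a linear Gaussian SDE reproduce the same $\bm{a}^{(k)}(t)$ and $\beta^{(k)}(t)$.
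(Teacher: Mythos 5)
Your proposal is correct and follows essentially the same route as the paper's proof: substituting the exponential-affine ansatz into the pricing PDE \eqref{Futures_PDE}, separating the terms linear in $\bm{x}$ from the constant terms to obtain the ODEs for $\bm{a}^{(k)}$ and $\beta^{(k)}$ with terminal conditions $\bm{a}^{(k)}(T)=\bm{e_1}$, $\beta^{(k)}(T)=0$, and solving them in closed form. The extra remarks on Feynman--Kac uniqueness and the direct Gaussian computation are sound additions but not part of the paper's (briefer) argument.
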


\begin{proof}
We substitute the ansatz solution \eqref{Futures_Price_MultiScale} into PDE \eqref{Futures_PDE}. The $t$-derivative is given by
\begin{align}\label{F_t}
\partial_t F(t,\bm{x}) &= \left( \left(\frac{d\bm{a}^{(k)}(t)}{dt}\right)'\bm{x}+\frac{d\beta^{(k)}(t)}{dt}\right)\notag\\
&\times\exp{\bigg(\bm{a}^{(k)}(t)'\bm{x}+\beta^{(k)}(t)\bigg)}.
\end{align} Then, the first and second derivatives satisfy
\begin{align}\label{F_x}
\nabla_{\bm{x}} F(t,\bm{x}) &= \bm{a}^{(k)}(t)F(t,\bm{x}),\\
\label{F_xx}
\nabla_{\bm{xx}} F(t,\bm{x}) &= \bm{a}^{(k)}(t)\bm{a}^{(k)}(t)'F(t,\bm{x}).
\end{align}
By substituting \eqref{F_t}, \eqref{F_x}, and \eqref{F_xx} into PDE (\ref{Futures_PDE}), we obtain
\begin{equation}
\frac{d\bm{a}^{(k)}(t)}{dt} -\bm{K}'\bm{a}^{(k)}(t) = 0,\label{a_ODE}
\end{equation}
and
\begin{align}
\frac{d\beta^{(k)}(t)}{dt}+(\bm{\mu}-\bm{\lambda})'\bm{a}^{(k)}(t)\notag\\
+\frac{1}{2}\Tr\bigg(\bm{\Sigma\Omega\Sigma}\bm{a}^{(k)}(t)\bm{a}^{(k)}(t)'\bigg)=0.\label{beta_ODE}
\end{align}
The terminal conditions of $\bm{a}^{(k)}(t)$ and $\beta^{(k)}(t)$ are given by 
\begin{equation}
\bm{a}^{(k)}(T)=\bm{e_1},\quad \beta^{(k)}(T) = 0.\label{a_beta_terminal_conditions}
\end{equation}

By direct substitution, the solutions to   ODEs (\ref{a_ODE}) and (\ref{beta_ODE})  are given by \eqref{a(t)} and \eqref{beta(t)}. 
\end{proof}

\subsection{Dynamic Futures Portfolio}
Now consider a collection of $M$ contracts of different maturities available to trade, where $M=1,$ 2, 3. We note that there are only three sources of randomness, so trading three contracts is sufficient. Any additional contract would be redundant in this model. By Proposition \ref{prop-futures-price}, we have
\begin{align}
\frac{dF_t^{(k)}}{F_t^{(k)}}&=\bm{a}^{(k)}(t)'\bm{\lambda}dt+\bm{a}^{(k)}(t)'\bm{\Sigma C} d\bm{Z}_t^{\P}\\
&\equiv \mu^{(k)}_F(t) dt + \bm{\sigma}^{(k)}_F(t)' d\bm{Z}_t^{\P},\label{mu_F^k}
\end{align}
where we have defined
\begin{equation}
\mu^{(k)}_F(t) \equiv \bm{a}^{(k)}(t)'\bm{\lambda},\quad\bm{\sigma}^{(k)}_F(t) \equiv \bm{C}'\bm{\Sigma}'\bm{a}^{(k)}(t).
\label{mu_F_Sigma_F_Definition_m}
\end{equation}
Define 
\begin{equation}
d\bm{F}_t = \left( \frac{dF^{(1)}_t}{F^{(1)}_t}, \cdots, \frac{dF^{(M)}_t}{F^{(M)}_t} \right)'.
\end{equation} Then, in matrix form, the system of futures dynamics is given by the set of SDE:
\begin{equation}
  d\bm{F}_t = \bm{\mu_F}(t) dt+\bm{\Sigma_F}(t)d\bm{Z}_t^\P ,
\end{equation}
where
\begin{align}\label{mu_F_Sigma_F_matrix_Definition_m}
\bm{\mu_F}(t) &= \left( \mu_F^{(1)}(t), \cdots , \mu_F^{(M)}(t) \right) '  \\
 \bm{\Sigma_F}(t) &= \left(\bm{\sigma}^{(1)}_F(t),\cdots, \bm{\sigma}^{(M)}_F(t)  \right)'.
\end{align}

Here, we assume there be no redundant futures contract, which means any futures contract could not be replicated by other $M-1$ futures contracts, indicating that $\rank\left(\bm{\Sigma_F}\right)=M$. 
%Then, we define $\bm{A}(t)=(\bm{a}^{(1)}(t),\ldots,\bm{a}^{(M)}(t))$ as an $N\times M$ matrix and it leads to $\bm{\mu_F}(t)=\bm{A}'(t)\bm{\lambda}$ and 

%If $\rank(\bm{\Sigma_F})=l<M$, then we only need to consider dynamically trading $l$ futures contracts instead of $m$ futures contracts. It is because these two problems are equivalent if $\rank(\bm{\Sigma_F})=l<M$. 

Next, we consider the trading problem for the investor. Let strategy $\bm{\pi}_t=\left(\pi^{(1)}_t,\cdots,\pi^{(M)}_t\right)'$, 
where the element $\pi^{(k)}_t$ 
denotes the amount of money invested in $k$-th futures contract.
In addition, we assume the interest rate be zero for simplicity. Then, for any admissible strategy $\pi$, the wealth process is 
\begin{align}
dW_t^\pi &= \sum_{k=1}^M \pi^{(k)}_t \frac{dF^{(k)}_t}{F^{(k)}_t}\notag\\
&= \bm{\pi}_t' \bm{\mu_F}(t) dt+\bm{\pi}_t'\bm{\Sigma_F}(t) d\bm{Z}_t^\P. \label{Wealth_SDE_m}
\end{align}
We note that the wealth process is only determined by the strategy $\bm{\pi}_t$ and it is not affected by factors variable $\bm{X}$ and futures prices $\bm{F}$.

The investor's risk preference is described by the exponential utility: 
\begin{equation}
 U(w)= -\exp(-\gamma w),\label{exponential_utility}
 \end{equation} 
where $\gamma>0$ denotes the coefficient of risk aversion. A  strategy $\bm{\pi}$ is said to be admissible if $\bm{\pi}$ is real-valued 
  progressively measurable and satisfies the Novikov condition  \cite{Novikov1972}:
\begin{equation}\label{integrability_condition}
\E^{\P}\bigg[\exp\left(\int_t^{\tilde T}\frac{\gamma^2}{2} \bm{\pi}_s'\bm{\Sigma_F}(s) \bm{\Sigma_F}'(s)\bm{\pi}_s ds \right)\bigg]< \infty.
\end{equation}

The investor fixes a finite optimization horizon $0< \tilde T \le T_1$, which means that $\tilde T$ has to be less than or equal to the maturity of the earliest expiring contract,
and seeks an admissible strategy $\bm{\pi}$ that maximizes the expected utility of wealth at $\tilde T$:
\begin{equation}
u(t,w)=\sup_{\bm{\pi}\in \mathcal{A}_t} \E[U(W_{\tilde T}^{\bm{\pi}})|W_t=w],
\label{Value_Function_m}
\end{equation}
where $\mathcal{A}_t$ denotes the set of admissible controls at the initial time $t$. Since the wealth SDE (\ref{Wealth_SDE_m}) does not depend on the factors variable $\bm{X}$ and futures prices $\bm{F}$, the value function does not depend on them either. 

To facilitate presentation, we define
\begin{equation}
\mathcal{L}^{\bm{\pi}}\cdot = \bm{\pi}_t'\bm{\mu_F}(t) \partial_w\cdot+\frac{1}{2} \bm{\pi}_t'\bm{\Sigma_F}(t) \bm{\Sigma_F}'(t)\bm{\pi}_t\partial_{ww}\cdot.
\end{equation}
Then, following the standard verification approach to dynamic programming \cite{FlemingSoner93}, the candidate value function   $u(t,w)$ and optimal trading strategy $\pi^*$ is found from the Hamilton-Jacobi-Bellman (HJB) equation
\begin{equation}
\partial_t u + \sup_{\bm{\pi}} 
\mathcal{L}^{\bm{\pi}} u 
= 0, \label{Value_Function_HJB_m}
\end{equation}
for $(t,w)\in[0,\tilde T)\times \mathbb{R}$, along with the terminal condition $u(T,w)=-e^{-\gamma w}$, for $w\in  \mathbb{R}.$  

\begin{theorem}\label{thm2} The unique solution to the HJB equation \eqref{Value_Function_HJB_m} is given by
\begin{equation}
u(t,w)=-\exp\left(
-\gamma w -\frac{1}{2}\int_t^{\tilde T}\Lambda^2(s)ds\right),
\label{u_Candidate_Solution_m}
\end{equation}
where   
\begin{equation}\label{Lambda}
 \Lambda^2(t)=\bm{\mu_F}(t)' \left(\bm{\Sigma_F}(t)\bm{\Sigma_F}(t)'\right)^{-1} \bm{\mu_F}(t)\,.
 \end{equation} 
 The optimal futures trading strategy is explicitly given by
\begin{equation}
\bm{\pi}^*(t)=\frac{1}{\gamma}
\left( \bm{\Sigma_F}(t)\bm{\Sigma_F}'(t) \right)^{-1}\bm{\mu_F}(t)
.\label{Optimal_Strategy_m}
\end{equation}
 
\end{theorem}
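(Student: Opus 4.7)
The plan is to attack the HJB equation \eqref{Value_Function_HJB_m} by first performing the pointwise maximization in $\bm{\pi}$ and then using a separable exponential ansatz that decouples the wealth variable $w$ from time $t$. The pointwise optimization is a quadratic program in $\bm{\pi}$: since $\bm{\Sigma_F}(t)\bm{\Sigma_F}'(t)$ has full rank $M$ by the no-redundancy assumption, and provided $\partial_{ww}u<0$ (to be verified on the ansatz), setting the $\bm{\pi}$-gradient of $\mathcal{L}^{\bm{\pi}}u$ to zero yields the candidate feedback control
\begin{equation}
\bm{\pi}^{*}(t,w) = -\frac{\partial_{w}u}{\partial_{ww}u}\bigl(\bm{\Sigma_F}(t)\bm{\Sigma_F}'(t)\bigr)^{-1}\bm{\mu_F}(t).
\end{equation}
Plugging this back into the HJB collapses the supremum to $-\tfrac{1}{2}\Lambda^{2}(t)(\partial_{w}u)^{2}/\partial_{ww}u$, giving the reduced PDE
\begin{equation}
\partial_{t}u - \tfrac{1}{2}\Lambda^{2}(t)\,\frac{(\partial_{w}u)^{2}}{\partial_{ww}u} = 0, \qquad u(\tilde T,w) = -e^{-\gamma w}.
\end{equation}

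Next I would try the ansatz $u(t,w) = -\exp\bigl(-\gamma w + \phi(t)\bigr)$ suggested by the exponential terminal data and the wealth-linearity of \eqref{Wealth_SDE_m}. A direct computation gives $\partial_{w}u = -\gamma u$, $\partial_{ww}u = \gamma^{2}u$ (so indeed $\partial_{ww}u<0$, confirming concavity in $w$ and justifying the earlier maximization step), and $(\partial_{w}u)^{2}/\partial_{ww}u = u$. The reduced PDE then factors as $u\bigl(\phi'(t) - \tfrac{1}{2}\Lambda^{2}(t)\bigr)=0$, and since $u\neq 0$ we obtain the scalar ODE
\begin{equation}
\phi'(t) = \tfrac{1}{2}\Lambda^{2}(t), \qquad \phi(\tilde T) = 0,
\end{equation}
whose unique solution is $\phi(t) = -\tfrac{1}{2}\int_{t}^{\tilde T}\Lambda^{2}(s)\,ds$. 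This recovers \eqref{u_Candidate_Solution_m}, and substituting back into the formula for $\bm{\pi}^{*}$ yields the feedback $\bm{\pi}^{*}(t)=\frac{1}{\gamma}\bigl(\bm{\Sigma_F}(t)\bm{\Sigma_F}'(t)\bigr)^{-1}\bm{\mu_F}(t)$ of \eqref{Optimal_Strategy_m}, independent of $w$.

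For uniqueness within the class of classical solutions of the claimed exponential-affine form, the argument above is essentially a biconditional: any solution of the form $-\exp(-\gamma w + \phi(t))$ is forced to satisfy the ODE for $\phi$, which has a unique solution given the terminal condition. What I expect to be the main obstacle is not these routine calculations but the justification that the constructed $u$ really solves the control problem \eqref{Value_Function_m}, i.e., the verification step. Here one must check that the deterministic feedback $\bm{\pi}^{*}(t)$ belongs to $\mathcal{A}_{t}$, which reduces to verifying the Novikov condition \eqref{integrability_condition} for $\bm{\pi}^{*}$; because $\bm{\mu_F}(t)$ and $\bm{\Sigma_F}(t)$ are bounded continuous functions on the compact interval $[0,\tilde T]$ (they are built from the smooth deterministic objects $\bm{a}^{(k)}(t)$ via \eqref{mu_F_Sigma_F_Definition_m}), the integrand in \eqref{integrability_condition} is uniformly bounded and the condition holds trivially. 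One then applies the standard verification theorem of \cite{FlemingSoner93}: Itô's formula on $u(t,W_{t}^{\bm{\pi}})$ produces a supermartingale for any admissible $\bm{\pi}$ and a true martingale for $\bm{\pi}^{*}$, which together with the terminal condition identifies $u$ with the value function and $\bm{\pi}^{*}$ as optimal.
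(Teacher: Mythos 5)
Your proposal is correct and follows essentially the same route as the paper: an exponential ansatz separating $w$ from $t$ (your $-e^{-\gamma w+\phi(t)}$ is the paper's $-e^{-\gamma w}h(t)$ with $h=e^{\phi}$), a first-order condition in $\bm{\pi}$ using invertibility of $\bm{\Sigma_F}\bm{\Sigma_F}'$, and a scalar terminal-value ODE. The only difference is cosmetic ordering (you maximize before imposing the ansatz) plus your explicit admissibility/verification discussion, which the paper leaves to a citation of the standard verification theorem.
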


\begin{proof}
We will first use the ansatz
\begin{equation}
u(t,w)= -e^{-\gamma w} h(t).
\end{equation}
Then, using the relations
\begin{equation}
\partial_t u = -e^{-\gamma w} \partial_t h(t),\quad \partial_w u=\gamma e^{-\gamma w} h(t), 
\end{equation}
and
\begin{equation}
\partial_{ww} u=-\gamma^2 e^{-\gamma w} h(t),
\end{equation}
the PDE \eqref{Value_Function_HJB_m} becomes
\begin{align}
-\frac{d}{dt} h(t) + &\sup_{\bm{\pi}_t} \bigg[\gamma
\bm{\pi'}_t\bm{\mu_F} (t) h
\notag\\
&-\frac{1}{2} \gamma^2 \bm{\pi}_t'\bm{\Sigma_F}(t)\bm{\Sigma_F}'(t)\bm{\pi}_t h \bigg]
= 0, \label{h_HJB_m}
\end{align}
with terminal condition $h(\tilde T)=1$.
From the first-order condition, which is obtained from differentiating the terms inside the supremum with respect to $\bm{\pi}_t$ and setting the equation to zero, we have
\begin{equation}
\gamma \bm{\mu_F}(t) 
-\gamma^2 \bm{\Sigma_F}(t)\bm{\Sigma_F}'(t) \bm{\pi}_t=0.
\end{equation}
Recall that $\rank(\bm{\Sigma_F}(t))=M$. Then, $\bm{\Sigma_F}(t)\bm{\Sigma_F}'(t)$ is an $M\times M$ invertible matrix. Accordingly, we have the optimal strategy \eqref{Optimal_Strategy_m}.
Given the fact that $A'A$ is the semi-positive definite matrix for any matrix $A$, the time-dependent component $\Lambda^2(t)=\bm{\mu_F}(t)' (\bm{\Sigma_F}(t)\bm{\Sigma_F}(t)')^{-1} \bm{\mu_F}(t)$ is always non-negative.

 Substituting $\bm{\pi}^*$ back, the equation \eqref{h_HJB_m} becomes
\begin{align}
-\frac{d}{dt} h(t) &+ 
\frac{1}{2}\Lambda^2(t) h(t)=0.
\label{h_HJB_Reduced_m}
\end{align}
Accordingly, we have
\begin{equation}
h(t)=\exp{\bigg(-\frac{1}{2}\int_t^{\tilde T}\Lambda^2(s)ds\bigg)}.
\end{equation}
\end{proof}

\begin{example}
  If there is only one futures contract $F^{(1)}$ available in the market. Then by \eqref{mu_F_Sigma_F_matrix_Definition_m}, we have
\begin{equation}\label{3factor-single-futures-mu}
 \bm{\mu_F} =  \mu_F^{(1)},\quad \bm{\Sigma_F} =  \bm{\sigma}_F^{(1)}(t)'.
\end{equation} 
Then, the optimal strategy \eqref{Optimal_Strategy_m} becomes
\begin{align}\label{Optimal_Strategy_MultiScale_Port1}
\pi^{(1)*}(t) &= \frac{1}{\gamma}\frac{\bm{\mu_F}(t)}{\bm{\Sigma_F}(t)\bm{\Sigma_F}'(t)} =  \frac{1}{\gamma}\frac{\mu_F^{(1)}}{\bm{\sigma}_F^{(1)}(t)'\bm{\sigma}_F^{(1)}(t)},
\end{align}
where $\mu_F^{(1)}$ and $\bm{\sigma}_F^{(1)}(t)$ are shown as \eqref{mu_F^k}.
\end{example}

In order to quantify  the value of trading futures to the investor, we define the investor's certainty equivalent  associated with the utility maximization problem. The certainty equivalent is the guaranteed cash amount   that would yield the same  utility as that from dynamically trading futures according to \eqref{Value_Function_m}. This amounts to applying the inverse of the utility function to the value function in \eqref{u_Candidate_Solution_m}. Precisely, we define
\begin{align}
\mathcal{C}(t,w)&:=U^{-1}(u(t,w))\\
&=w+\frac{1}{2\gamma}\int_t^{\tilde T}\Lambda^2(s)ds.\label{nfactor:certequiv}
\end{align}

Therefore, the certainty equivalent is the sum of the investor's wealth $w$ and a non-negative time-dependent component  $\frac{1}{2\gamma}\int_t^{\tilde T}\Lambda^2(s)ds$.  The certainty equivalent is also inversely proportional to the risk aversion parameter $\gamma$, which means  that a more risk averse investor has a lower certainty equivalent, valuing the futures trading opportunity less. From \eqref{a(t)}, \eqref{mu_F_Sigma_F_Definition_m}, \eqref{mu_F_Sigma_F_matrix_Definition_m} and \eqref{Lambda}, we see that the certainty equivalent   depends on the constant matrix $\bm{K}$, volatility matrix $\bm{\Sigma}$, correlation matrix $\bm{C}$ and market prices of risk $\bm{\lambda}$. Nevertheless, the certainty equivalent does not depend on the current values of factors $\bm{X}_t$.

\begin{table}
\vspace*{.2cm}
\centering
\begin{scriptsize}
\begin{tabular}{c|c|c|c|c}
\hline
$X^{(1)}_0$ & $X^{(2)}_0$& $X^{(3)}_0$& $\alpha_2$& $\alpha_3$ \\
\hline
1& 0.5 &0.5 &0.5 & 0.5\\
\hline
\hline
 $\epsilon$& $\delta$ & $\sigma_1$& $\sigma_2$&$\sigma_3$ \\
\hline
0.05& 0.01 & 0.8  & 0.02 & 0.3\\
\hline
\end{tabular}

\vspace*{.1 cm}

\begin{tabular}{c|c|c|c|c|c}

\hline

 $\rho_{12}$&$\rho_{13}$ & $\rho_{23}$ & $\lambda_1$ & $\lambda_2$ & $\lambda_3$\\
 
\hline
 0& 0& 0 & 0.02 & 0.02 & 0.02 \\
 \hline
 \hline
 $T_1$&$T_2$&$T_3$ &$\tilde{T}$ & $\gamma$& $\kappa$\\
\hline
$1/12$&$2/12$&$3/12$& $1/12$ & $1$&5\\
\hline
\end{tabular}
\end{scriptsize}

\caption{Parameters for the MCTOU model.}
\label{Parameters_Table}
\end{table}

\begin{figure}
    \centering
%    {\includegraphics[width=0.45\textwidth]{Factors}}
{\includegraphics[trim=0.4cm   1.5cm  1.5cm  2.5cm,clip, width=0.50\textwidth]{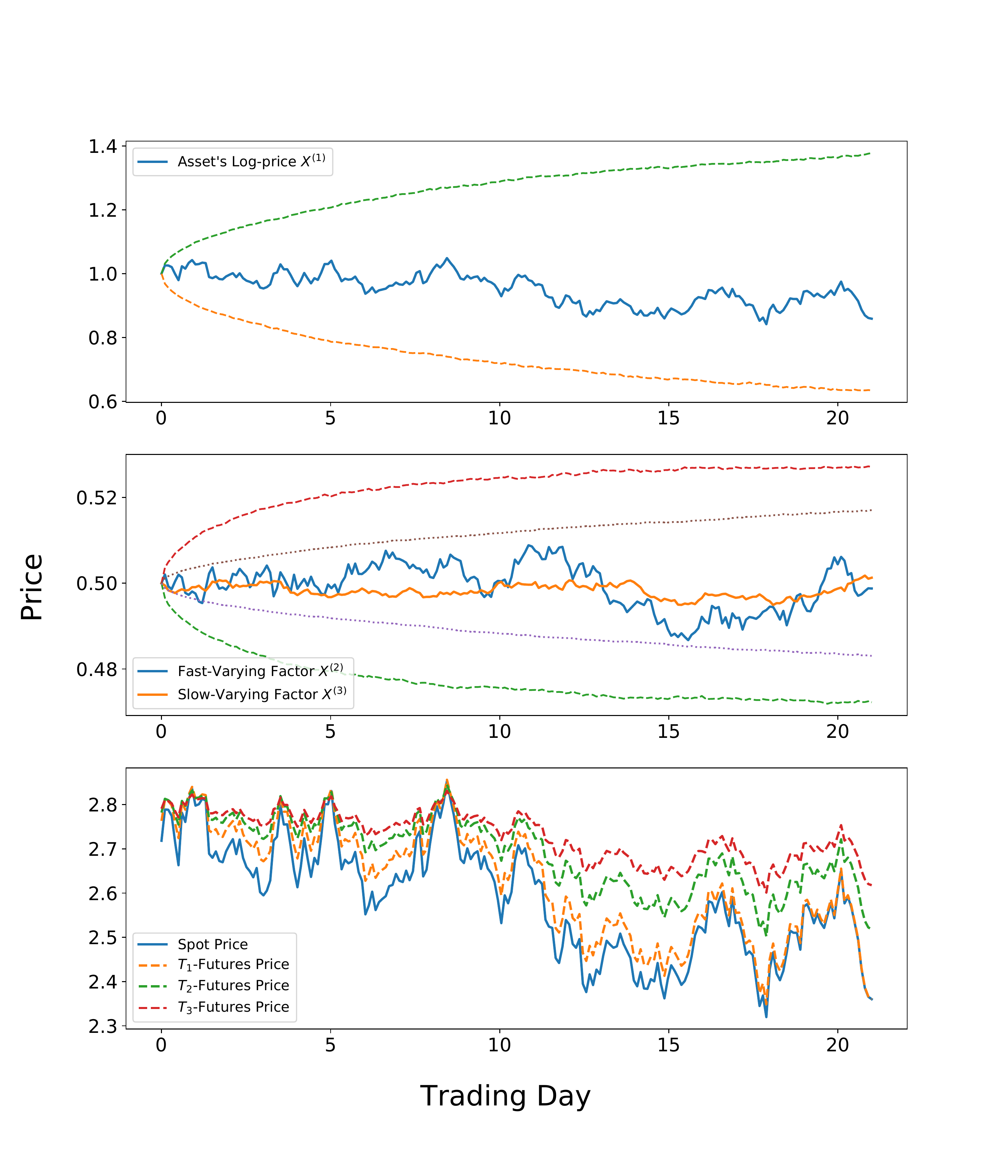}}
\caption{
Top: simulation paths for asset's log-price $X^{(1)}$. Dashed curves represent $95\%$ confidence interval. Middle: simulation path for fast varying factor $X^{(2)}$ and slow-varying factor $X^{(3)}$. Dashed and dotted curves represent $95\%$ confidence interval for $X^{(2)}$ and $X^{(3)}$, respectively. Bottom: Sample price paths for the underlying asset and associated futures.}
\label{Prices}
\end{figure}

\section{Numerical Illustration}\label{sect-numeric}
In this section, we simulate the MCTOU process and illustrate the outputs from our trading model.   With the closed-form expressions obtained in the Section \ref{sect-formula-multiscale},  we now generate the futures prices, optimal strategies and wealth processes numerically, using the parameters in Table  \ref{Parameters_Table}. Primarily, we let $\epsilon$ and $\delta$ be small parameters and we consider trading three futures with maturities $T_1=1/12$ year, $T_2=2/12$ year and $T_3=3/12$ year.   Then, our trading horizon will be $\tilde{T}=1/12$ year, no greater than the futures maturities. We assume 252 trading days in a year and 21 trading days in a month (or 1/12 year). In our figures, we show the corresponding trading days on the $x$ axis.

\begin{figure}
    \centering
{\includegraphics[trim=0.4cm   1.8cm  1.5cm  2.2cm,clip, width=0.48\textwidth]{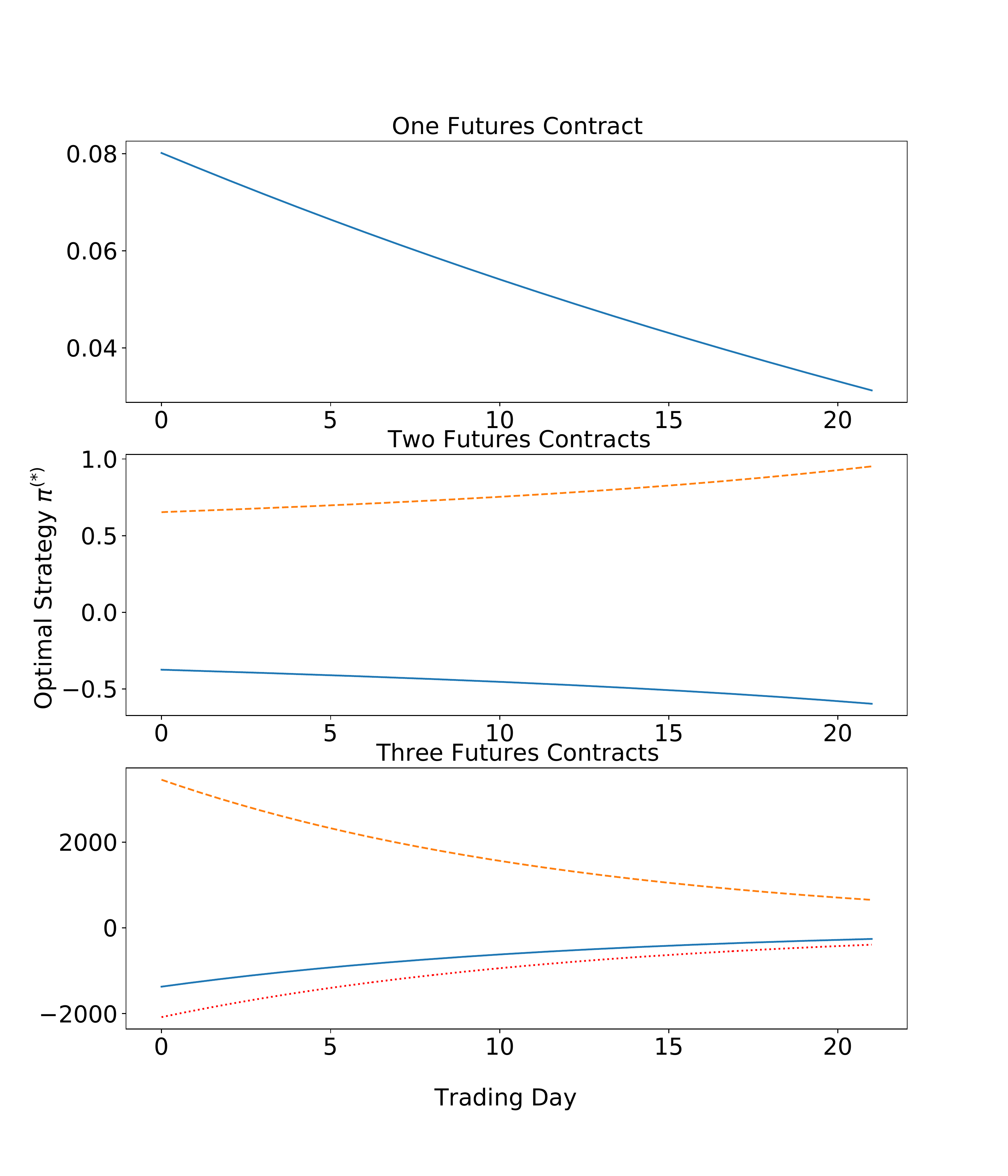}
}
\caption{
Optimal strategies $\pi^{(*)}$ for different futures combinations. Solid, dashed and dotted lines represent the optimal position (in \$) on $T_1$-futures, $T_2$-futures and $T_3$-futures, respectively.}
\label{Optimal_Investment_MultiScale}
\end{figure}

\begin{table*}
\vspace*{.2cm}
\centering\begin{scriptsize}
  \begin{tabular}{cc|ccccccc}
     \hline
     \multicolumn{2}{c|}{\multirow{2}{*}{Parameters}} &\multicolumn{7}{c}{Futures Combinations (Maturity)}\\
     \cline{3-9}
     % after \\: \hline or \cline{col1-col2} \cline{col3-col4} ...
      & & $T_1$ & $T_2$ & $T_3$ & $\{T_1, T_2\}$ & $\{T_1, T_3\}$ & $\{T_2, T_3\}$ & $\{T_1, T_2,T_3\}$ \\
      \hline
\hline
\multirow{3}{*}{$\rho_{12}= 0 $} 
& $\rho_{13}= -0.5 $ &
0.563 & 1.58 & 3.25 & 5.36 & 4.65 & 4.41 & 419 \\
& $\rho_{13}= 0 $ &
0.502 & 1.09 & 1.74 & 3.09 & 2.62 & 2.41 & 417 \\
& $\rho_{13}= 0.5 $ &
0.456 & 0.837 & 1.19 & 2.88 & 2.35 & 2.01 & 417 \\

\hline
\multirow{3}{*}{$\rho_{12}= 0.5 $} 
& $\rho_{13}= -0.5 $ &
0.561 & 1.56 & 3.23  & 5.34  & 4.64  & 4.40  & 543 \\
& $\rho_{13}= 0 $ &
0.500 & 1.08  & 1.73  & 3.08  & 2.62  & 2.40  & 542 \\
& $\rho_{13}= 0.5 $ &
0.454 & 0.833 & 1.18  & 2.87  & 2.34  & 2.01  & 541 \\

\hline
\multirow{3}{*}{$\rho_{12}= -0.5 $} 
& $\rho_{13}= -0.5 $ &
0.565 & 1.59  & 3.27  & 5.39  & 4.66  & 4.42  & 571 \\
& $\rho_{13}= 0 $ &
0.504 & 1.10  & 1.75  & 3.11  & 2.63  & 2.41  & 569 \\
& $\rho_{13}= 0.5 $ &
0.457 & 0.842 & 1.20  & 2.90  & 2.36  & 2.02  & 569 \\

     \hline
   \end{tabular}\end{scriptsize}
     \caption{Certainty equivalents ($\times 10^{-4}$) for all possible futures combinations under different correlations.\label{CE_Table}}
\end{table*}

In Figure \ref{Prices}, we plot the simulation paths and $95\%$ confidence intervals for three factors in the top figure and middle figure. As shown in the middle panel, the  $95\%$ confidence interval of the slow-varying factor $X^{(3)}$ is much narrower than the one for fast-varying factor $X^{(2)}$. At the bottom, we plot the spot price and futures prices. The three paths for the futures prices are highly correlated and $T_1$-futures price is equal to the asset's spot price at its maturity date $T_1$, which is the 21st trading day.

In Figure \ref{Optimal_Investment_MultiScale}, we plot the optimal strategies as functions of time for different portfolios and different correlation parameters. In each sub-figure, from top to bottom, we show the optimal strategies for one-contract portfolio, two-contract portfolio and three-contract portfolio respectively. The optimal investments on $T_1$-futures, dashed lines represent the optimal investment on $T_1$-futures, $T_2$-futures, and $T_3$-futures are represented by solid, dashed, and dotted lines respectively. 
The optimal cash amount invested are deterministic functions for time, but the optimal units of futures held do vary continuously with the prevailing futures price. 

Moreover, the investor takes large long/short positions in three-contract portfolio since all sources of risk can be hedged. We provide sample path for wealth process for three-contract portfolios in the Figure \ref{Wealth}.

In  Figure \ref{nplotmn}, we  see that  the certainty equivalent increases as a function of trading horizon $\tilde T$, which means that the more time the investor has, the more valuable is the trading opportunity. As the trading horizon reduces to zero, the certainty equivalent converges to the initial wealth $w$, which is set to be 0 in this example, as expected from \eqref{nfactor:certequiv}. Also, with a lower risk aversion parameter $\gamma$, the investor has a higher certainty equivalent for any given trading horizon.

 Table \ref{CE_Table} shows the certainty equivalents for all possible futures combinations under various correlation configurations. The certainty equivalent is much higher when   more contracts are traded. In addition, if there is only one futures contract to trade, the certainty equivalent is increasing with respect to its maturity, see first three columns. The certainly equivalents tend to be higher when $\rho_{12}$ and $\rho_{13}$ are negative. 

\begin{figure}[h]
    \centering
{\includegraphics[trim=0.3cm   0cm  1.5cm  1.2cm,clip,  width=0.45\textwidth]{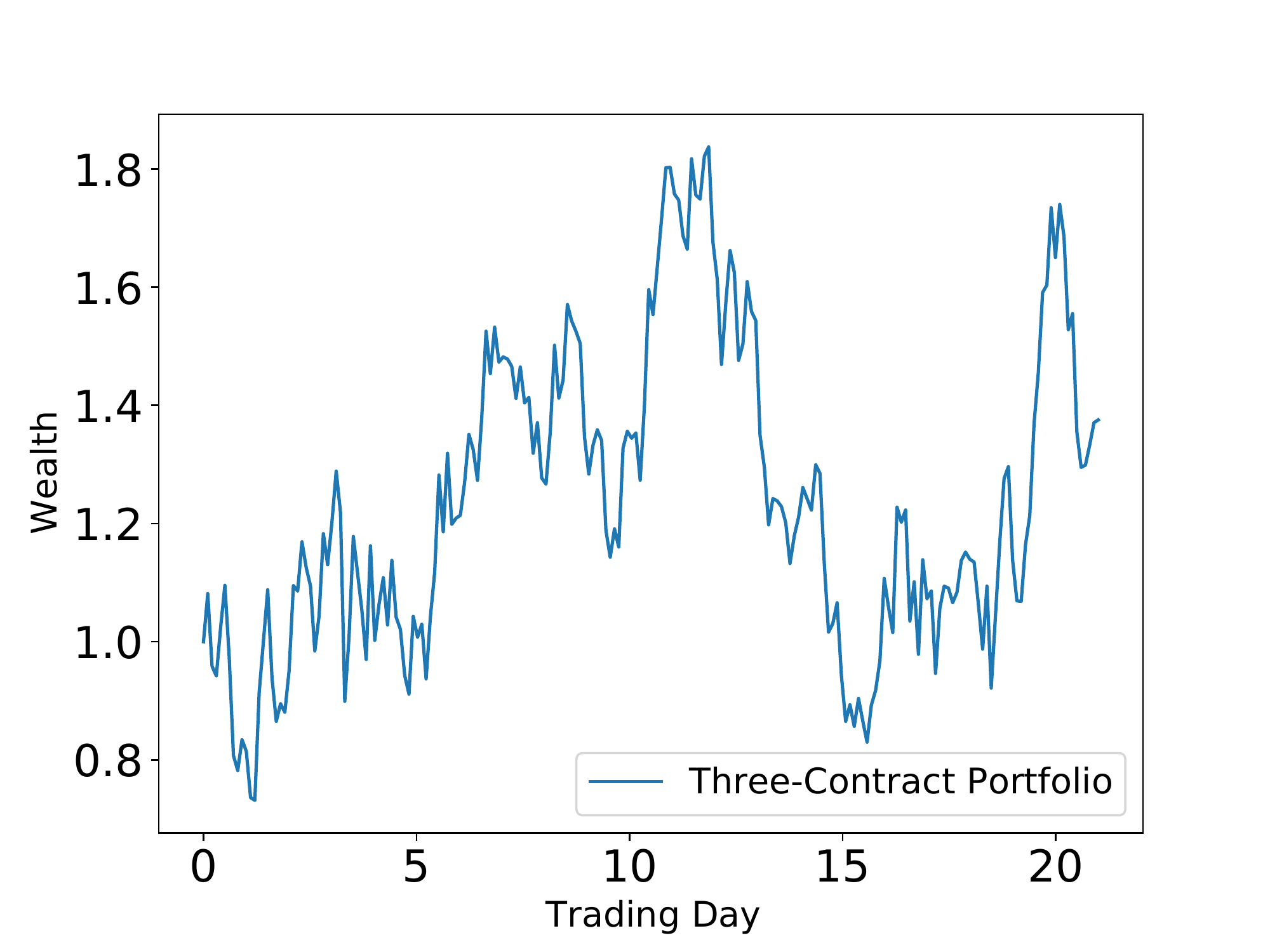}}
\caption{
Sample path for wealth process for the three-futures portfolio. }
\label{Wealth}
\end{figure}

\begin{figure}[h]
    \centering
{\includegraphics[trim=1cm   0.7cm  1.8cm  2cm,clip, width=0.45\textwidth]{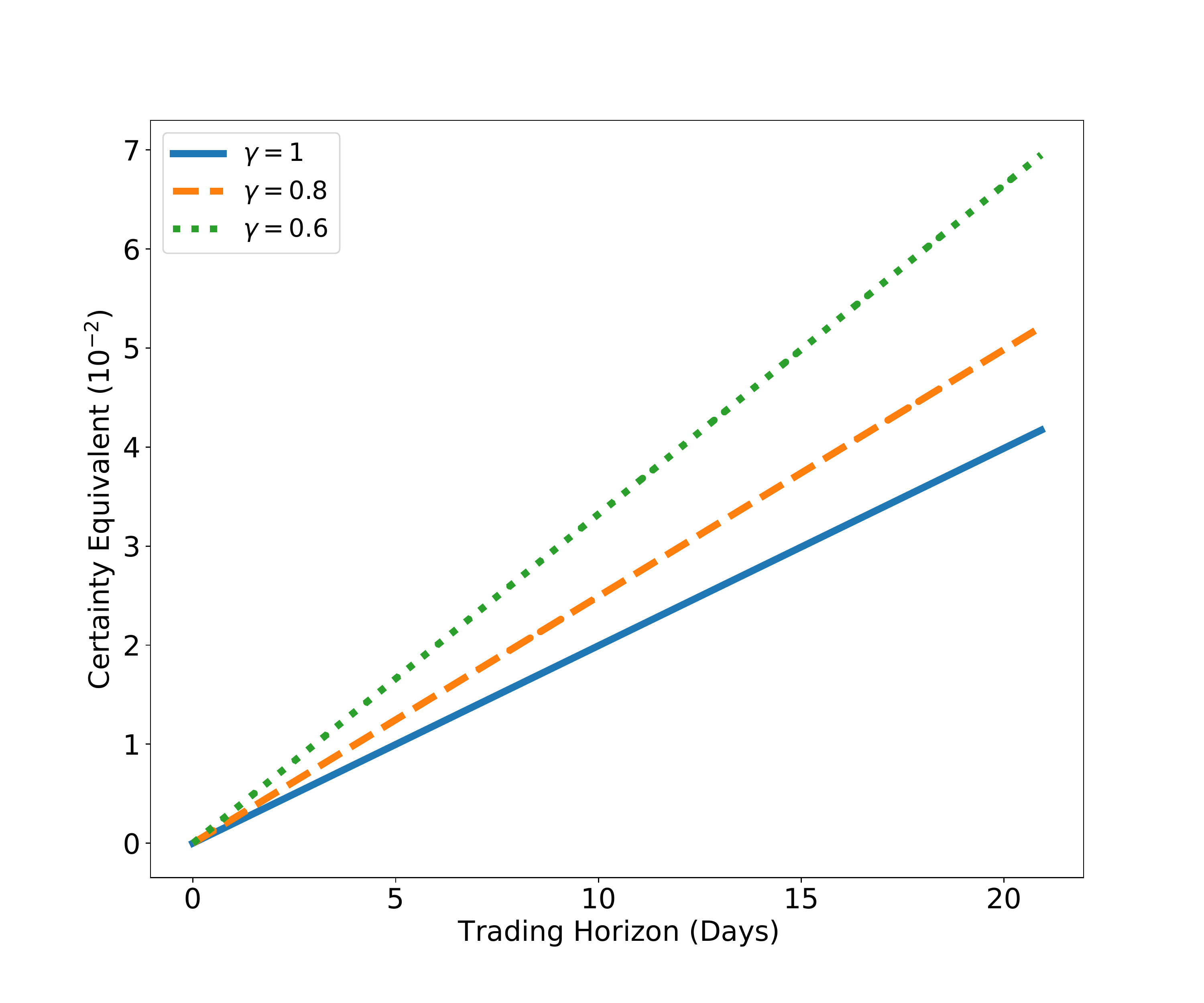}}
\caption{
Certainty equivalents for the three-futures portfolio as  the  trading horizon $\tilde T$  and   risk aversion parameter $\gamma$ vary.}
\label{nplotmn}
\end{figure}

\section{Conclusion}\label{sect-conclusion}
We have studied the optimal trading of   futures under a multiscale multifactor model. Closed-form expressions for the optimal controls and   value function are derived through the analysis of the associated HJB equation. Using these, we have illustrated the path behaviors of the  futures prices and optimal positions. We also quantify the values of the trading different combinations of futures under different model parameters.

\addtolength{\textheight}{-12cm}   % This command serves to balance the column lengths
                                  % on the last page of the document manually. It shortens
                                  % the textheight of the last page by a suitable amount.
                                  % This command does not take effect until the next page
                                  % so it should come on the page before the last. Make
                                  % sure that you do not shorten the textheight too much.

%%%%%%%%%%%%%%%%%%%%%%%%%%%%%%%%%%%%%%%%%%%%%%%%%%%%%%%%%%%%%%%%%%%%%%%%%%%%%%%%

\bibliographystyle{IEEEtran}

\bibliography{biblio}

\end{document}